\theoremstyle{plain}
\newtheorem{thm}{Theorem}
\newtheorem{lem}[thm]{Lemma}
\newcommand\ceil[1]{\lceil#1\rceil}
\title{An anti-incursion algorithm for unknown probabilistic adversaries on connected graphs}
\date{}
\author{Jesse Geneson\\
\small\tt geneson@gmail.com
}
\begin{document}
\maketitle

\begin{abstract}
A gambler moves on the vertices $1, \ldots, n$ of a graph using the probability distribution $p_{1}, \ldots, p_{n}$. A cop pursues the gambler on the graph, only being able to move between adjacent vertices. What is the expected number of moves that the gambler can make until the cop catches them?

Komarov and Winkler proved an upper bound of approximately $1.97n$ for the expected capture time on any connected $n$-vertex graph when the cop does not know the gambler's distribution. We improve this upper bound to approximately $1.95n$ by modifying the cop's pursuit algorithm.
\end{abstract}

\section{Introduction}

Games with cops and robbers on graphs, which can be applied for designing anti-incursion programs, have been studied for several decades \cite{T1, T2, T3, T4, NW, Q}. We investigate a version of the game where the adversary moves among the vertices $1, \ldots, n$ following a probability distribution $p_{1}, \ldots, p_{n}$. Before the game starts, the cop picks and occupies a vertex from $G$. In each round of
the game, the cop selects and moves to an adjacent vertex or stays at the same vertex. The
gambler chooses to occupy a vertex randomly based on a time-independent distribution, not restricted to only adjacent vertices.

Whenever both players occupy the same vertex at the same time, the cop wins. The gambler is called a known gambler if the cop knows their probability distribution. Otherwise the gambler is called unknown. 

Gambler-pursuit games model anti-incursion programs navigating a linked list of ports, trying to minimize interception time for enemy packets. Komarov and Winkler proved that the expected capture time on any connected $n$-vertex graph is exactly $n$ for a known gambler \cite{KW}, assuming that both players use optimal strategies. For an unknown gambler, Komarov and Winkler proved an upper bound of approximately $1.97n$ \cite{KW}.

Komarov and Winkler conjectured that the general upper bound for the unknown gambler on a connected $n$-vertex graph can be improved from about $1.97n$ to $3n/2$, and that the star is the worst case for this bound. In Sections \ref{unkn} and \ref{unkn1}, we improve the upper bound for the unknown gambler's expected capture time to approximately $1.95n$ by using a different strategy for the cop.

\section{Unknown Gambler Pursuit Algorithm}\label{unkn}

Let $G$ be a connected $n$-vertex graph. As in \cite{KW}, let $T$ be a spanning subtree of $G$. We describe the cop's pursuit algorithm, and then we prove an upper bound of approximately $1.95n$ on the expected capture time.

Suppose that the cop performs a depth first search of $T$, except the cop stays at some leaves for two turns instead of one. Specifically, the cop uniformly at random selects a subset $U$ of $\ceil{0.72912 n}$ vertices and stays at the vertices in $U$ for an extra turn if the vertices are leaves. If there is a vertex $v$ in $U$ that is not a leaf, then the depth first search would already go twice through $v$, so the cop does not need to stay an extra turn at $v$. After the proof, we explain the reason for using the number $0.72912$.

The cop flips a coin to decide whether to perform the depth first search forward or backward. Thus the total number of turns in a single depth first search (including the extra turns for the leaves in $U$) is at most $1+2(n-1)+\ceil{0.72912 n} \leq 2.72912 n$. The search is repeated until capture. Since the cop flips a coin to decide whether to search forward or backward, the expected number of turns in the successful depth first search is at most $1.36456n$.

\section{Analysis}\label{unkn1}

Let the vertices of the graph be named $1, \ldots, n$. Suppose that the unknown gambler chooses vertex $i$ with probability $p_{i}$. We split the proof into two cases to show that the probability of evasion in a single depth first search is at most $0.17745$.

\begin{lem}
If there are two vertices $i$ and $j$ that the cop visits at least twice each such that $p_{i}+p_{j} \geq 0.732$, then the probability of evasion in a single depth first search is less than $0.162$.
\end{lem}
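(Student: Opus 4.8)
\medskip
\noindent\emph{Proof proposal.}

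The plan is to reduce the single-pass evasion probability to a product over vertices and then throw away everything except the two heavy, twice-occupied vertices. First I would record the basic formula: since the gambler re-picks its location each turn independently from the fixed distribution $p_{1},\dots,p_{n}$, and is not constrained by adjacency, the gambler evades the cop during one depth-first search if and only if at every turn $t$ its choice avoids the cop's current vertex $c_{t}$. Writing $m_{v}$ for the number of turns the cop occupies vertex $v$ during that pass (counting the extra turn if $v$ is a leaf in $U$), independence gives
\[
\Pr[\text{evasion in one pass}] \;=\; \prod_{t}\bigl(1-p_{c_{t}}\bigr) \;=\; \prod_{v}\bigl(1-p_{v}\bigr)^{m_{v}}.
\]
By hypothesis $m_{i}\ge 2$ and $m_{j}\ge 2$, while $m_{v}\ge 0$ for every other $v$, and each base satisfies $0\le 1-p_{v}\le 1$.

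Next I would discard all factors except those at $i$ and $j$ and lower the two remaining exponents to $2$, which only increases the product because every base lies in $[0,1]$:
\[
\Pr[\text{evasion in one pass}] \;\le\; (1-p_{i})^{2}(1-p_{j})^{2} \;=\; \bigl[(1-p_{i})(1-p_{j})\bigr]^{2}.
\]
Then I would apply AM--GM to the nonnegative reals $1-p_{i}$ and $1-p_{j}$ (both lie in $[0,1]$ since $p_{i}+p_{j}\le 1$); using $p_{i}+p_{j}\ge 0.732$,
\[
(1-p_{i})(1-p_{j})\;\le\;\Bigl(\tfrac{(1-p_{i})+(1-p_{j})}{2}\Bigr)^{2} \;=\;\Bigl(1-\tfrac{p_{i}+p_{j}}{2}\Bigr)^{2}\;\le\;(0.634)^{2},
\]
so that $\Pr[\text{evasion in one pass}]\le (0.634)^{4}\approx 0.1616 < 0.162$, which is the claim. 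Note that this bound uses nothing about $U$ or the structure of $T$ beyond the existence of the two twice-occupied heavy vertices, so it holds for every realization of $U$ and hence in expectation over $U$ as well.

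There is essentially no hard step here: the content is the one-line product formula together with the two elementary facts ``a smaller exponent on a base in $[0,1]$ gives a larger value'' and AM--GM. The one thing to get right is the product formula itself --- in particular that the per-turn events $\{g_{t}\ne c_{t}\}$ really are independent (this is where the time-independence and the absence of an adjacency constraint on the gambler are used), and that it is legitimate both to drop the $v\ne i,j$ factors and to replace $m_{i},m_{j}$ by $2$. It is also worth remarking \emph{why} the threshold is $0.732$: this value is essentially forced by the computation, since for a threshold much below $0.731$ the quantity $(1-s/2)^{4}$ already exceeds $0.162$, so the bound carries only a small amount of slack. If a sharper estimate were wanted one could keep the remaining factors and use $1-p_{v}\le e^{-p_{v}}$ to pick up an extra factor $e^{-(1-p_{i}-p_{j})}\le e^{-0.268}$, but that refinement is not needed here.
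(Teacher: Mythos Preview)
Your proof is correct and follows essentially the same route as the paper: reduce to the bound $(1-p_i)^2(1-p_j)^2$ by discarding all other factors, then maximize under the constraint $p_i+p_j\ge 0.732$. The only cosmetic difference is that the paper substitutes $p_j=0.732-p_i$ and maximizes the one-variable function $(1-p_i)^2(0.268+p_i)^2$ directly, whereas you use AM--GM; both arrive at the same extremal point $p_i=p_j=0.366$ and the same value $\approx 0.1616$.
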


\begin{proof}
If the cop visits $i$ and $j$ both at least twice, then the probability of evasion is at most $(1-p_{i})^{2}(1-p_{j})^{2} \leq (1-p_{i})^{2}(0.268+p_{i})^{2}$, which has a maximum value of approximately $0.16157$ on the interval $[0,1]$ at $p_{i} = 0.366$.
\end{proof}

Next we show that the probability of evasion is at most $e^{-1.72912} < 0.17745$ when there are no vertices $i$ and $j$ that the cop visits at least twice each such that $p_{i}+p_{j} \geq 0.732$.

\begin{lem}\label{didj}
Suppose that there are no vertices $x$ and $y$ that the cop visits at least twice each such that $p_{x}+p_{y} \geq 0.732$. Then the probability of evasion for a single depth first search is at most $(1-\frac{1}{n})^{1.72912 n}$.
\end{lem}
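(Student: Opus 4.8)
The plan is to express the evasion probability of a single search as a product of per-vertex factors, lower-bound the exponents appearing there, and then deal with the randomness of $U$ via a symmetric-function inequality. To begin, I would condition on the cop's coin flips, that is, on the set $U$ and the choice of search direction, and let $m_v$ denote the number of turns the cop spends at vertex $v$ during the resulting walk. Since the gambler's positions in different turns are independent, each equal to $v$ with probability $p_v$, the gambler evades this particular search with probability exactly $\prod_{v=1}^{n}(1-p_v)^{m_v}$; note that reversing the walk does not change this multiset of visit counts.

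Next I would establish the pointwise bound $m_v \ge 1 + \mathbf{1}[v\in U]$ for every vertex $v$. A depth-first walk of the spanning tree $T$ traverses each edge twice, so it visits every vertex of degree at least two, and also its starting vertex, at least twice; every remaining vertex is a leaf distinct from the start, hence visited once by the walk, plus one additional turn precisely when it lies in $U$. Since a vertex of $U$ that is not a leaf is already visited at least twice by the walk, the bound $m_v \ge 1 + \mathbf{1}[v\in U]$ holds in all cases, and therefore, for each realization of $U$, the probability of evasion of that search is at most $\big(\prod_{v=1}^n (1-p_v)\big)\cdot\big(\prod_{v\in U}(1-p_v)\big)$.

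Now I would average over the uniformly random set $U$ of size $k:=\ceil{0.72912 n}$. The first factor does not depend on $U$, and $\prod_v(1-p_v)\le\big(1-\tfrac1n\big)^{n}$ by the AM--GM inequality, using $\sum_v(1-p_v)=n-1$. For the second factor, $\mathbb{E}_U\big[\prod_{v\in U}(1-p_v)\big]$ equals $e_k(1-p_1,\dots,1-p_n)\big/\binom nk$, and by Maclaurin's inequality (equivalently, by the negative association of the indicators $\mathbf{1}[v\in U]$, or by a short induction on $k$) this is at most $\big(\tfrac1n\sum_v(1-p_v)\big)^{k}=\big(1-\tfrac1n\big)^{k}$. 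Multiplying the two bounds and using $k\ge 0.72912\,n$ gives an evasion probability at most $\big(1-\tfrac1n\big)^{n+k}\le\big(1-\tfrac1n\big)^{1.72912\,n}$, which is the claim.

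The only real obstacle is the averaging step: one cannot simply push the expectation through a logarithm, since Jensen's inequality points the wrong way, so bounding $\mathbb{E}_U[\prod_{v\in U}(1-p_v)]$ genuinely requires Maclaurin's inequality or an equivalent negative-correlation statement; the rest is bookkeeping about depth-first walks. I would also remark that the hypothesis excluding a pair of at-least-twice-visited vertices $x,y$ with $p_x+p_y\ge 0.732$ does not appear to be needed for this particular bound — it seems present only to isolate this lemma's case from that of the preceding one.
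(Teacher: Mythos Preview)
Your argument is correct and takes a genuinely different, cleaner route than the paper's. The paper reduces to the auxiliary search $S'$ (in which vertices of $U$ are hit exactly twice and the rest exactly once), writes the evasion probability as $\sum_{c,d}(1-p_i)^c(1-p_j)^d f_{c,d}$ over the random visit counts to a generic pair $i,j$, and then does a first- and second-derivative computation in $p_i$ (holding $p_i+p_j=a$ fixed) to show the maximum occurs at $p_i=p_j$; the hypothesis is used precisely to force $f_{2,2}=0$ when $a\ge 0.732$, so that the second derivative stays nonpositive in that range. You bypass all of this: the pointwise bound $m_v\ge 1+\mathbf{1}[v\in U]$, AM--GM on $\prod_v(1-p_v)$, and Maclaurin's inequality on $\mathbb{E}_U\bigl[\prod_{v\in U}(1-p_v)\bigr]=e_k/\binom{n}{k}$ immediately give the averaged bound $(1-\tfrac1n)^{n+k}$. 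This is more elementary and, as you correctly observe, does not use the hypothesis at all, so your argument in fact makes Lemma~1 unnecessary for the paper's main theorem.

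One small caveat on scope: the lemma as stated carries a hypothesis that depends on the realized set $U$, and the paper's own proof (via the $f_{c,d}$) is really computing the evasion probability \emph{conditioned} on that event. Your argument instead bounds the evasion probability averaged over \emph{all} $U$, unconditionally. These are not literally the same assertion---an unconditional bound does not formally imply the conditional one---but yours is exactly the inequality used downstream (it directly yields $\Pr[\text{evasion}]\le(1-\tfrac1n)^{1.72912n}<0.17745$ without any case split), so it is the more useful statement.
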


\begin{proof}
Let $i, j$ be any two vertices of $G$ and suppose that $p_{i}+p_{j} = a$ and let $t_{1}, \ldots, t_{n-2}$ denote the vertices of $G$ not equal to $i$ or $j$. Given the condition that there are no vertices $x$ and $y$ that the cop visits at least twice each such that $p_{x}+p_{y} \geq 0.732$, then the probability of evasion for a single depth first search can be bounded by performing the following reduction to obtain shorter searches called $S$ and $S'$. 

First we define $S'$. If the cop visits a vertex $v$ not in $U$ more than once in the original depth first search, skip the cop's visits to $v$ after the first visit to $v$ in $S'$; if the cop visits a vertex $v$ in $U$ more than twice in the original depth first search, skip the cop's visits to $v$ after the second visit to $v$ in $S'$. Note that with $S'$, vertices in $U$ are visited exactly twice, and vertices not in $U$ are visited exactly once. The number of turns in $S'$ is thus $n+\ceil{0.72912n}=\ceil{1.72912n}$. To obtain $S$ from $S'$, skip all visits to vertices $i$ and $j$. 

Note that the reduction can only increase the probably of evasion, so the probability of evasion for the original depth first search is at most the probability of evasion for $S'$, which is at most the probability of evasion for $S$. Note also that the searches $S$ or $S'$ could be impossible for the cop to perform, since consecutive vertices in $S$ or $S'$ might not be adjacent. The searches $S$ and $S'$ are only used in this proof to obtain an upper bound on the probability of evasion for the original depth first search. 

For $c, d \in \left\{1,2\right\}$, define $f_{c,d}(p_{t_{1}}, \ldots, p_{t_{n-2}})$ to be the probability that the gambler evades the cop in search $S$ and that the cop makes $c$ visits to vertex $i$ and $d$ visits to vertex $j$ in search $S'$, conditioned on the fact that there are no vertices $x$ and $y$ that the cop visits at least twice each such that $p_{x}+p_{y} \geq 0.732$ in the original depth first search. Then the probability of evasion in search $S'$ is $p = p(p_{i},p_{j},p_{t_{1}},\ldots,p_{t_{n-2}})$ of the form $(1-p_{i})(1-a+p_{i}) f_{1,1}(p_{t_{1}}, \ldots, p_{t_{n-2}})+(1-p_{i})^{2}(1-a+p_{i}) f_{2,1}(p_{t_{1}}, \ldots, p_{t_{n-2}})+(1-p_{i})(1-a+p_{i})^{2} f_{1,2}(p_{t_{1}}, \ldots, p_{t_{n-2}})+(1-p_{i})^{2}(1-a+p_{i})^{2} f_{2,2}(p_{t_{1}}, \ldots, p_{t_{n-2}})$. Note that $f_{1,2} = f_{2,1}$ by symmetry and $p(\frac{1}{n},\ldots,\frac{1}{n}) = (1-\frac{1}{n})^{\ceil{1.72912 n}}$. 

Then $\frac{d p}{d p_{i}} = (a-2p_{i})f_{1,1}(p_{t_{1}}, \ldots, p_{t_{n-2}})+(a-2)(2p_{i}-a)f_{1,2}(p_{t_{1}}, \ldots, p_{t_{n-2}})+2(1-p_{i})(1-a+p_{i})(a-2p_{i})f_{2,2}(p_{t_{1}}, \ldots, p_{t_{n-2}})$, which is equal to zero at $p_{i} = \frac{a}{2}$. Moreover $\frac{d^{2} p}{d p_{i}^{2}} = -2 f_{1,1}(p_{t_{1}}, \ldots, p_{t_{n-2}})+2(a-2)f_{1,2}(p_{t_{1}}, \ldots, p_{t_{n-2}})+(12(p_{i}-\frac{a}{2})^{2}-(a-2)^{2})f_{2,2}(p_{t_{1}}, \ldots, p_{t_{n-2}})$, which is at most $-2 f_{1,1}(p_{t_{1}}, \ldots, p_{t_{n-2}})+2(a-2)f_{1,2}(p_{t_{1}}, \ldots, p_{t_{n-2}})+(2a^{2}+4a-4)f_{2,2}(p_{t_{1}}, \ldots, p_{t_{n-2}})$.

If $a \geq 0.732$, then $f_{2,2}(p_{t_{1}}, \ldots, p_{t_{n-2}}) = 0$ since there are no vertices $x$ and $y$ that the cop visits at least twice each such that $p_{x}+p_{y} \geq 0.732$. Thus $\frac{d^{2} p}{d p_{i}^{2}} = -2 f_{1,1}(p_{t_{1}}, \ldots, p_{t_{n-2}})+2(a-2)f_{1,2}(p_{t_{1}}, \ldots, p_{t_{n-2}}) \leq 0$ for all $p_{i} \in [0,a]$, so $p$ is maximized when $p_{i} = \frac{a}{2}$.

If $a \leq 0.732$, then $2a^{2}+4a-4 < 0$. Thus $\frac{d^{2} p}{d p_{i}^{2}} \leq 0$ for all $p_{i} \in [0,a]$, so $p$ is maximized when $p_{i} = \frac{a}{2}$.

This implies that $p$ is maximized only when $p_i=p_j$ for any two vertices $i$ and $j$ of $G$. Thus if there are no vertices $x$ and $y$ that the cop visits at least twice each such that $p_{x}+p_{y} \geq 0.732$, then $p$ is maximized when $p_{i} = \frac{1}{n}$ for all $i$, so the probability of evasion is at most $(1-\frac{1}{n})^{\ceil{1.72912 n}} \leq (1-\frac{1}{n})^{1.72912 n}$.
\end{proof}

By the last lemma, the probability of evasion for a single depth first search is at most $(1-\frac{1}{n})^{1.72912 n} \leq e^{-1.72912} < 0.17745$. Thus the expected number of depth first searches until the cop catches the robber is at most $\frac{1}{1-0.17745} < 1.21574$.

Let $X$ denote the random variable for the total number of turns in all of the cop's depth first searches, not including the successful depth first search. Let $Y$ denote the random variable for the number of turns in the successful search. By linearity of expectation, the expected capture time is equal to $E(X)+E(Y)$. 

We proved above that $E(X) \leq (2.72912 n)(1.21574-1) < 0.58879n$ and $E(Y) \leq 1.36456n$. Thus the upper bound on the expected capture time is less than $1.95335 n$.

\section{Comments}

The reason why we chose $0.72912$ for the constant in the pursuit algorithm was because the function $\frac{x}{1-e^{1-x}}-\frac{x}{2}$ has a minimum value of $1.95328$ on the interval $(1,\infty)$ at $x = 2.72912$. Despite this fact, it seems likely that our upper bound is not tight. 

The best current bounds on the maximum possible expected capture time for any connected $n$-vertex graph are between approximately $1.082n$ and $1.953n$. The lower bound follows from Komarov's proof for the cycle $C_{n}$ \cite{KT}.

However, there are a few families of graphs for which there are already tighter bounds. Komarov and Winkler proved an upper bound of approximately $1.082n$ for the expected capture time on the cycle $C_{n}$ \cite{KW} to match the lower bound from \cite{KT}. It is also easy to show that the expected capture time for the path $P_{n}$ is between $1.082n$ and $1.313n$ using the method in \cite{KW}. 

\section{Acknowledgments}
The author thanks Shen-Fu Tsai for helpful questions and suggestions on the exposition of this paper.

\end{document}